\documentclass[conference]{IEEEtran}

\usepackage{amsmath}
\usepackage{graphicx}
\include{psfig}
\usepackage{epsfig}
\usepackage{array}
\usepackage{psfrag}
\usepackage{amssymb}
\usepackage{color}
\usepackage{pstricks,pst-node,pst-text,pst-3d,pst-plot}
\usepackage{cite}

\usepackage{amsmath}
\usepackage{graphics}
\usepackage{trig}
\usepackage{graphicx}
\usepackage{amsmath, amsthm, amssymb}
\usepackage{epsf}
\usepackage{psfrag}
\usepackage{pstricks}
\usepackage{pst-node}
\usepackage{pst-3d}
\usepackage{pst-plot}
\usepackage{soul}
\usepackage{pbox}

\usepackage{array}
\usepackage{amssymb}
\usepackage{color}
\usepackage{algpseudocode}
\usepackage{algorithm}
\usepackage{enumerate}
\usepackage{wrapfig}
\usepackage{epstopdf}

\theoremstyle{remark}

\theoremstyle{lemma}
\newtheorem{lemma}{Lemma}[section]

\def\bx{{\bf x}}

\def\bn{{\bf n}}
\def\bt{{\bf t}}

\def\bW{{\bf W}}

\def\bh{{\bf h}}

\def\bH{{\bf H}}

\def\bQ{{\bf Q}}

\def\bw{{\bf w}}
\def\bR{{\bf R}}
\def\br{{\bf r}}

\def\bA{{\bf A}}
\def\ba{{\bf a}}

\def\bx{{\bf x}}

\def\bk{{\bf k}}

\def\bSigma{{\boldsymbol \Sigma}}
\def\diag{{\rm diag}}

\begin{document}

%\baselineskip 8.0mm

%\title{A Coalition Formation Game-theoretic Approach to Unknown Targets Prosecution in Heterogeneous UAV Networks}
%\doublespacing
% \title{A Game-theoretic Approach to coalition based Targets Prosecution in Heterogeneous UAV Networks}

 \title{ A Coalition Formation Approach to Coordinated Task Allocation in Heterogeneous UAV Networks}
% \title{ A Coalition Formation Approach to Target Identification in Heterogeneous UAV Networks}
%\author{Mohammad Zaeri-Amirani,
%Fatemeh Afghah, Abolfazl Razi, Elizabeth Bentley $^*$\thanks {\small F.~Afghah is the corresponding
%author. M.~Zaeri-Amirani, F. Afghah and Abolfazl Razi are with the School of Informatics, Computing and Cyber Systems,
%    Northern Arizona University, Flagstaff, AZ, United States 86011, email: \texttt{\{Mohammad.Zaeri-Amirani, fatemeh.afghah, abolfazl.razi\}@nau.edu}, phone: +1
%928-523-5095. Elizabeth Bentley is with Air Force Research Laboratory, Rome, NY, email: \texttt{elizabeth.bentley.3@us.af.mil}. \newline Distribution Statement A: Approved for Public Release; distribution unlimited: 88ABW-2016-4923 on 04 Oct 2016 }}

\author{
   \IEEEauthorblockN{
    Fatemeh Afghah\IEEEauthorrefmark{1},
     Mohammad Zaeri-Amirani\IEEEauthorrefmark{1},
     Abolfazl Razi\IEEEauthorrefmark{1}, 
    Jacob Chakareski\IEEEauthorrefmark{2}, 
    and
     Elizabeth Bentley\IEEEauthorrefmark{3}}
   \IEEEauthorblockA{
     \IEEEauthorrefmark{1}School of Informatics, Computing and Cyber Systems,
    Northern Arizona University, Flagstaff, AZ, United States\\
     Email: \{fatemeh.afghah,Mohammad.Zaeri-Amirani,abolfazl.razi\}@nau.edu}
   \IEEEauthorblockA{
    \IEEEauthorrefmark{2}Department of Electrical \& Computer Engineering,
    University of Alabama, Tuscaloosa, AL, United States\\
     Email: jacob@ua.edu}
   \IEEEauthorblockA{
    \IEEEauthorrefmark{3}Air Force Research Laboratory,
    Rome, NY, United States\\
     Email: elizabeth.bentley.3@us.af.mil}
 }
 \maketitle

\begin{abstract}
The problem of adversary target detection and the subsequent task completion using a heterogeneous network of resource-constrained UAVs is considered. No prior knowledge about locations and required resources to identify these targets is available to the UAVs. In the proposed leader-follower coalition formation model, the UAV that first locates a target serves as the coalition leader and selects a group of follower UAVs to complete the task associated with the identified target. The goal of the coalition formation is to complete the designated tasks with minimal resource utilization. Another role of coalition members is to make the ground station aware of the detected adversary target by forwarding its signal to the station via a distributed cooperative relaying scheme. We also propose a reputation-based mechanism for coalition formation to monitor the cooperative behavior of the UAVs over the course of time and exclude potentially untrustworthy UAVs. Simulation results show the efficiency of the proposed method in forming optimal coalitions compared to alternative methods.\footnote{ Distribution Statement A: Approved for Public Release; distribution unlimited: 88ABW-2016-4923 on 04 Oct 2016 }

\end{abstract}

Index Terms-  Coalition formation, task coordination, unknown environment, cooperative communication, UAV networks, merge-and-split.

\IEEEpeerreviewmaketitle
\section{Introduction}
% task allocation, algorithms, centralaize/distributed, scalability, complexity
%% Why UAVs
Recent advancements in communication and computation systems allow deployment of large teams of Unmanned Aerial Vehicles (UAVs) to cooperatively accomplish complex missions that often cannot be performed by a single UAV. Several features including task coordination and reliable communications are required to enable interoperability within the heterogeneous airborne networks, particularly for autonomous operations and providing on-board data processing during the mission \cite{Razi_WiSEE,Razi_Asilomar}. These heterogeneous autonomous vehicle systems could provide a great flexibility to complete compound tasks which are distributed in time and space. Target detection, data collection, target tracking and prosecution, imaging, and surveillance are typical examples of tasks that can be accomplished by a heterogeneous UAV network.

%% Task coordination
%\textcolor{red}{It is important to note that each task required specific resources and even if the corresponding resources are available by the heterogeneous UAV network, those are usually limited.}
The task allocation problem in multi-agent systems is defined as the process of allocating a set of tasks to groups of agents to ensure timely and efficient task completions, noting the individual capabilities of the agents \cite{Korsah}. Several studies have addressed the complex problem of task assignment using different approaches including mixed integer linear programming \cite{Schumacher,Darrah}, dynamic network flow optimization \cite{Schumacher2,Schumacher3}, market-based strategy \cite{Dias,Gerkey}, finite state machine \cite{Zhong}, and multiple choice knapsack problem \cite{Alighanbaril}.
While in the majority of the aforementioned works, the tasks are centrally assigned to the agents by a base station that has complete knowledge about the tasks and often the agents' capabilities; in many dynamic systems, tasks may appear at unpredictable locations and times (e.g. target detection in army fields, search and rescue operations). Hence, a priori knowledge about these tasks is not always available to the base station. Even when such centralized task allocation algorithms exist, they are often computationally intensive even in homogeneous networks and not easily scalable to systems with a large number of tasks or agents \cite{Zhang}. Noting the computation and communication capabilities of modern devices, the agents can be considered as smart entities with decision-making capabilities. Such cognitive capability facilitates implementation of distributed task allocation mechanisms by allowing the agents to observe the environment and monitor the operation of other agents and properly respond to the observed situations.

% Coalition formation
Coalition formation game is a class of games, in which the players cooperate with each other by forming various sub-groups called coalitions. This class of games has been recently used in various applications such as task assignment in multi-agent systems, and communication networks \cite{CISS_Ashwija,Bayram,Arslan,Vig,Saad_UAV}.
In this paper, we study the problem of cooperative task completion in a network of heterogeneous UAVs with constrained individual resources. We assume that a number of targets are distributed in an unknown environment, where no prior information about the targets' time of appearance, and location is available. We also assume that there is one compound task associated with each target. The tasks can differ intrinsically based on the characteristics of their encountered targets in terms of mobility, speed, position, and the required resources. For instance, prosecution of a fighter tank or a long-range missile in a battle field require different sets of equipment.
The objectives of our proposed model include: i) locating the distributed targets, ii) identifying their associated tasks and required resources, and iii) completing the identified tasks. In order to accomplish these goals, we form several coalitions of UAVs using a coalition formation method, in which each coalition will complete a task associated to one target. The proposed coalition formation model with a leader-follower structure is designed so that it ensures providing adequate resources to complete each encountered task while minimally exceeding the minimum required resources. The proposed dynamic coalition formation model enables the UAVs to overcome the limitations of their individual capabilities such as limited payload and computation, and communication resources. Formation of UAV coalitions can also extend the coverage area in target tracking and surveillance applications compared to utilizing individual UAVs.  
The traveling distances of the UAVs to the tasks are also taken into account in forming optimal coalitions to complete the tasks in a timely manner. Furthermore, in order to get more detailed information about the identified targets (e.g. an adversary object in a battlefield), the UAVs in each coalition are required to forward the broadcast messages by their encountered target to the base station using a beamforming technique.

In majority of previously reported works on coalition formation for task allocation, the objective is to enhance the efficiency of the formed groups in task performance noting the different capabilities available at the coalition members \cite{Shehory,Service2011,Dang,Vig,Bayram,Chen}. Therefore, the agents often consider a solution to be optimal when it maximizes the total utilities of the group in executing the existing tasks with minimum resources. In these works, it is assumed that all the agents are fully trustable, and they are obligated to cooperate with one another by utilizing their initially claimed resources to complete the tasks. However, this assumption is far too optimistic since cooperation is not an inherent characteristic of cognitive but potentially self-interested agents \cite{Afghah_NWRCS,Afghah_CDC}. In this work, we consider a commercial scenario, in which the UAVs can belong to different vendors, and they can collect some sort of monetary benefits for participation in each mission. In the proposed leader-follower coalition formation model, we account for both the efficiency in task performance considering the resource constraints during the coalition formation from the leaders' perspective, as well as the individual preferences of the followers during the followers' decision making to join the available coalitions. In such realistic commercial setting, the UAVs may exhibit selfish behaviors by not utilizing the resources that they originally committed during the coalition formation, with the incentive of saving these resources for future missions to obtain higher benefits. In our proposed model, we develop a novel reputation-based mechanism that keeps the record of the UAVs' cooperative behaviors. The cumulative credits for UAVs are used to identify the trustable UAVs during the member selection procedure in coalition formation. This definition involves a trade-off scenario for the UAVs. On one hand, they prefer to avoid resource sharing with others to save their limited available resources, while on the other hand they need to cooperate with other agents and sustain a good reputation in order to be selected for next missions.

% Communication
%\textcolor{red}{In addition, employing UAVs for telecommunication tasks is well investigated in the literature \cite{Gupta_survey,Cheng,Kopeikin}. Considering the physical distribution of the UAVs, they are good candidates for relaying the signals to improve the connectivity of communication systems. Among different relaying methods, amplify and forward method takes lots of attention due to the simplicity in the system implementation. In this technique, also known as beamforming technique, each relay amplifies and shifts the phase of the transmitter's signal and rebroadcasts it with the knowledge of channel state information (CSI) in the system \cite{Zaeri2012,zaeri2016}.}
The rest of this paper is organized as follows: In Section \ref{sec:SystemModel}, an overview of the system model for heterogeneous UAV network is provided and the proposed coalition formation game-theoretic model is described. The optimization problem to determine the optimal beamforming vector is described in Section \ref{sec:com}. Simulation results are provided in Section \ref{sec:simulation}, followed by concluding remarks in Section \ref{sec:Conclusion}.

\textbf{Notation:} Vectors and matrices are represented by lower case and upper case bold letters, respectively. The notations $(.)^*$, $(.)^T$ and $(.)^H$ demonstrate the conjugate, transpose and conjugate transpose (Hermitian) operations, respectively. The real value and imaginary values are shown by $\mathcal{R}\{.\}$ and $\mathcal{I}\{.\}$. The diagonal matrix $\bA = \diag\{\ba\}$ is a matrix whose diagonal elements are the elements of the vector $\ba$. Finally, the function $\gamma_{L,\epsilon}(x)$ for $x>0$ is defined as
$\left\{\begin{array}{cc}
         x & x \leq 1 + \epsilon\\
         L & \mbox{Otherwise}
       \end{array}\right.$, for a given value of $L$ and small value of $\epsilon$, and we have $\gamma_{L}(x)=\gamma_{L,0}(x)$.

 %\vspace{-25 pt}
\begin{figure}[t]
  \centering
  \centerline{\includegraphics[width=8. cm]{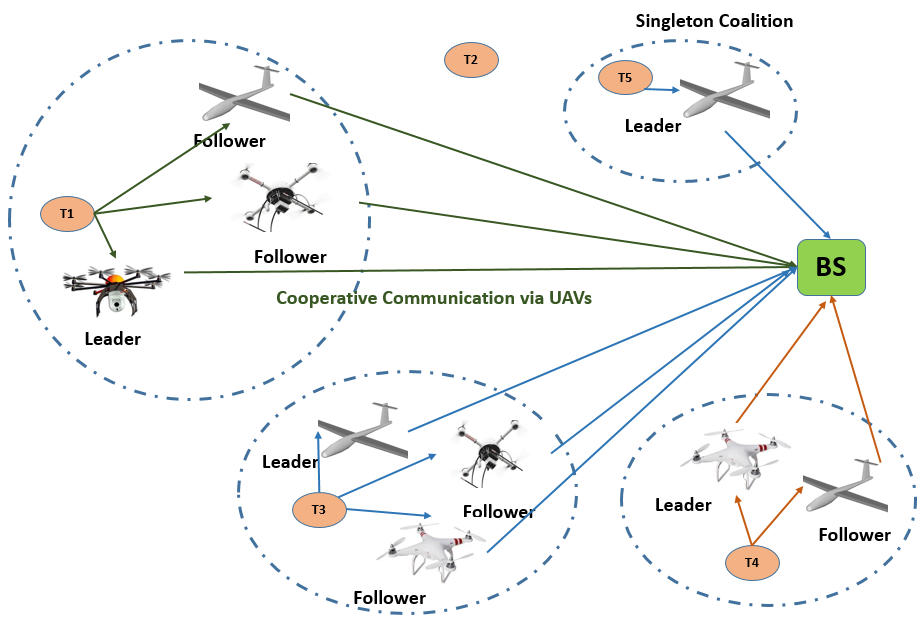}}
 \vspace{-5 pt}
\caption{Leader-follower coalition formation in a heterogeneous UAV network to prosecute unknown targets}
\label{fig:systemmodel}
 \vspace{-15 pt}
\end{figure}

\section{Coalition Formation for Joint Task Allocation and Communication Optimization} \label{sec:SystemModel}
A heterogeneous system of $N$ UAVs, $\mathcal{U} = \{U_1,U_2,\dots,U_{N}\}$ is considered, where the UAVs can form various coalitions to accomplish the dynamic tasks in the network, as depicted in Fig. \ref{fig:systemmodel}. Suppose that the $i^{\text{th}}$ UAV is assigned with a capability vector $\br_i = [r_i^1,r_i^2, \ldots, r_i^{N_r}]^T$ with $r_i^j \geq 0$, $j = 1, 2, \ldots, N_r$, that identifies the available resources at this agent assuming that there are $N_r$ different types of resources in the network. Each resource can be either consumable or non-consumable. If the $i^{\text{th}}$ UAV does not have resource $j$, then $r_i^j = 0$, and if resource $j$ is non-consumable, then $r_i^j = \infty$. This vector can vary over time based on the amount of resources the UAVs consume to complete the tasks.

A coalition of users, named $\mathcal{S}_k$, is a non-empty subset of UAVs, $\mathcal{S}_k \subset \mathcal{U}$ that handles task $k$. It is assumed that each coalition of UAVs will handle one target at a time. Moreover, we consider a sparse distribution of targets in the environment, hence noting the potential distances among the formed coalitions, it is assumed that each UAV can be only a member of one of these coalitions at a certain time. Hence, the coalitions are non-overlapping, meaning that $\mathcal{S}_k \cap \mathcal{S}_l=\phi$. The coalition of all UAVs, $\mathcal{U}$ is the grand coalition and a coalition that only contains one UAV is called a singleton coalition. Each coalition $\mathcal{S}_k$ is associated with a vector of available resources which is shown by $\bR_k = [R_k^1,R_k^2,\ldots,R_k^{N_r}]^T$ with $R_k^{j} \geq 0$, $j = 1,2,\ldots,N_r$. For simplicity, one may assume that the vector $\bR_k$ is additive over the elements' resource vectors in the coalition, i.~e.~ $\bR_k = \sum_{j \in \mathcal{S}_k}{\br_i}$.

It is assumed that the UAVs operate in an unpredictable environment, where targets of various types with different resource requirements appear in random time and locations and move freely afterwards. Therefore, the base station is not aware of the potential targets. We assume that all the UAVs have the capability of searching for new targets in a limited geographical field, and the target detection is performed by the UAVs independent of the base station. The procedure of prosecuting a target is defined as a compound task to be accomplished by a coalition (the term task can refer to a set of multiple sub-tasks required to prosecute a target). The tasks can differ inherently based on the characteristics of their corresponding targets; hence the number, duration and location of tasks vary over time. Since each task is associated to a target, we use the terms target and task interchangeably throughout this paper.

When a UAV detects a task $k$, it determines the type and amount of the resources required to carry out this task, i.~e.~ $\Gamma_k = [\tau_{k}^1,\tau_{k}^2,\ldots,\tau_{k}^{N_r}]^T$, where $\tau_{k}^i \geq 0$. If this UAV does not have sufficient resources to perform the detected task, it calls for a coalition formation and serves as the coalition leader.
Since the ground station has no information regarding the existing targets in the network, the member UAVs of each coalition are required to listen to the radio communication of a target and forward it to the ground station \cite{UAVMilitary2016}. In other words, a coalition of UAVs is supposed to relay the target's message $s_k$ with $E\{s_k^*s_k\} = 1$ to the base station.

In our proposed model, first each leader forms an initial coalition of the available UAVs to maximize the efficiency in task performance for the encountered target, and send a request to join the coalition to the selected followers. Then, the follower UAVs observe the formed coalitions by different leaders and decide to join the coalition that benefits them the most. The details of the proposed coalition formation algorithm is described in Section \ref{sec:formation} as well as in Fig. \ref{fig:statediagram}. In order for a leaders to form a coalition, it takes into account several factors including i) collecting the required resources to perform a task, ii) traveling time of the UAVs to the task, and iii) the quality of service (QoS) of target message's communication at the base station. The goal of this combinational complex optimization problem is to perform the tasks in timely and resource efficient manner. This goal suggests that the members of a newly formed coalitions should collectively have all the required resources to perform the encountered task, while minimally surpassing the task requirements. The latter objective can be considered as the cost of coalition, hence this coalition formation game is  not super-additive. Other associated costs to coalition formation include a higher chance of UAVs' collisions when having more UAVs in a coalition, as well as heavier signaling loads for the members to exchange the necessary information.

Another criterion in coalition formation from the leaders' perspective is the deadline to complete a task. Hence, the leaders need to take into account the traveling time of the UAVs to the task in order to choose the coalition members. For this purpose, $\delta_{i,k}$ is defined as the traveling time of the $i^{\text{th}}$ member of coalition to target $k$ location that should be less than or equal to a preferred threshold $f_{\delta}(\rho)$ as a function of field radius $\rho$.

A key contribution of the proposed model is to identify the reliable UAVs by the leaders to join the coalitions and filter out any UAVs with selfish behavior. \footnote{It is worth mentioning that such networks are prone to suffer from both malicious/intruding UAVs as well as selfish ones. The focus of this work is to prevent potential selfish behavior of legitimate non-altruistic UAVs in not spending their resources after joining a coalition to encourage cooperation among them, while studying the required authentication methods to identify intruding UAVs, or preventing all potential malicious acts of these UAVs (e.g. reporting false information about their available resources or location) is out of the scope of this paper. } This is facilitated by defining a cumulative cooperation credit for each UAV that indicates its cooperative behavior in terms of resource sharing during task completion over the course of time. On one hand, since the leaders prefer to select the potential followers with higher credits, the UAVs are motivated to maintain a good cooperative credit. On the other hand, the follower UAVs similar to other types of cognitive agents may act selfishly and avoid consuming their resources after joining a coalition. The incentive behind this behavior is to save their limited available resources for future missions to earn more  monetary benefits. The details of cooperation credit is defined in section \ref{sec:credit}.
%Here, we also define a cooperative credit for each UAV that monitors its cooperative behavior over the course of time as a determinative criterion for the leaders to select the best potential followers to encourage the UAVs to participate in task completion.
%This enables the leader to select the trustable users with higher cooperative credit while forming a coalition. This in turn encourages the UAVs to maintain a good reputation to keep the chance of being selected and involved in performing the tasks.

%Temporary
\begin{figure}[b]
  \centering
  \centerline{\includegraphics[width=8 cm]{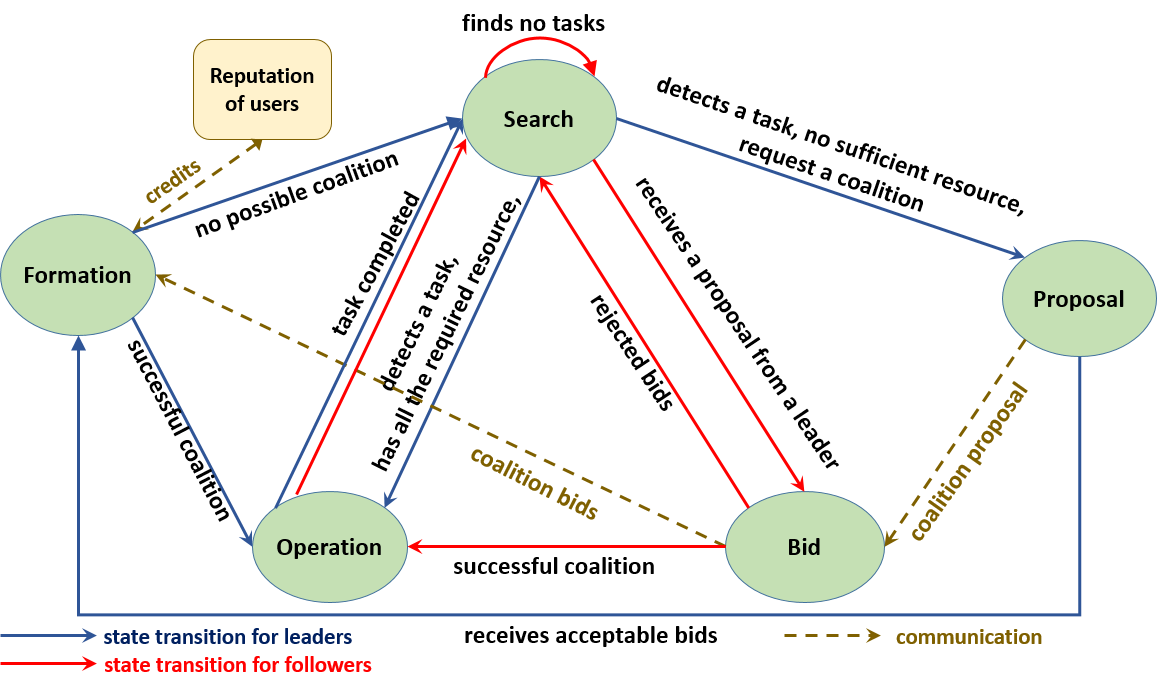}}
\caption{State diagram of the proposed coalition formation.}
\label{fig:statediagram}
\end{figure}
Here we describe the details of the coalition formation process. First, the coalition leader broadcasts a \emph{proposal} to form a coalition. Then, the UAVs who possess at least one of the required resources can respond to this request by reporting their available resources as well as their current positions. This is called the \emph{bid} process. The coalition leader then evaluates all the bids by assessing the resources offered by the volunteers, their estimated arrival time, their cooperative credits as well as the provided QoSs for the relaying services during the \emph{formation} process to determine if a coalition can be formed to complete the encountered task. If a coalition can not be formed, it informs the UAVs, otherwise it gets back to the selected UAVs with information about the tasks (e.g required resources, and location). The UAVs that are not selected by the potential leader go back to the search mode. When a UAV receives multiple coalition formation requests, it considers two factors of expected increase in its cooperative credit based on information it received from the leaders about required resources for the tasks as well as its distance to the targets in order to decide which coalition to join. The coalition formation process is summarized in Fig. \ref{fig:statediagram}. In Section \ref{sec:beam}, the cooperative communication method to relay the target's message to the ground station is described, followed by the definition of the cooperative credit and formulation of the proposed coalition formation algorithm in Sections \ref{sec:credit} and \ref{sec:formation}.

%\begin{figure}
%     \centerline{\resizebox{!}{4.5cm}{\includegraphics{state_diagram.eps}}}
%    \caption{State diagram of the proposed coalition formation.}
%    \label{fig:statediagram}
%\end{figure}

\subsection{Cooperative Communications at Member UAVs of Each Coalition} \label{sec:beam}
Amplify-and-Forward (AF) is a widely used relaying method in communication networks due to its simplicity and low-complexity.  Beamforming is a technique based on AF relaying, where a set of relay nodes amplify and shift the phase of a transmitter's signal and rebroadcast it such that they add up constructively, while interfering signals add up destructively.
In order to relay the target's message to the ground station, a beamforming scenario is proposed in which each UAV in the coalition multiplies the received target's signal to a complex weight number and rebroadcasts it (AF relaying). A common assumption of knowledge of Channel State Information (CSI) in the system is followed in this model\cite{Zaeri2012,zaeri2016}. Assume that $\bh_{TU} = [h_{TU_0},h_{TU_1}, \ldots,h_{TU_{N_{\mathcal{S}}}}]^T$ is the vector of instantaneous channel coefficients between the target and the coalition members, where $N_{\mathcal{S}} = |\mathcal{S}|-1$. Likewise,  $\bh_{UB} = [h_{U_0B},h_{U_1B}, \ldots,h_{U_{N_{\mathcal{S}}}B}]^T$ is the vector of channel coefficients between the coalition members and the base station. Here, the index $0$ represents the leader and the indices $i = 1,2,\ldots,N_{\mathcal{S}}$ represent the coalition members. It is assumed that the leader has the knowledge of instantaneous reciprocal channel vectors, (i.~ e.~ $\bh_{TU}$ and $\bh_{UB}$) and is responsible for calculating the optimum beamforming and notifying the members of its coalition.

If $s_T$ denotes the transmit signal, the vector of the received signal at the coalition can be written as:
\begin{equation}\label{Eq:RecAtUAVs}
  \bx = \bh_{TU} s_T + \bn,
\end{equation}
where $\bn \sim \mathcal{N}(\textbf{0},\bSigma^{\frac{1}{2}})$, with diagonal covariance matrix $\bSigma$, is an additive zero mean Gaussian noise vector at the coalition members. It is assumed that each UAV is aware of its local noise characteristics and performs the whitening process before the beamforming. Hence, the whitened received signal at UAVs can be written as:
\begin{equation}\label{Eq:WhiteRecAtUAVs}
  \widetilde{\bx} = \bSigma^{-\frac{1}{2}}\bx = \bSigma^{-\frac{1}{2}}\bh_{TU} s_T + \bSigma^{-\frac{1}{2}}\bn,
\end{equation}
The $i^{\text{th}}$ coalition member, $i = 0,1, \ldots, N_{\mathcal{S}}$, multiplies the received target signal by a complex weight $w_i^*$ and then relays it. The broadcasted signal by the coalition members can be written as follows, assuming $\bw = [w_0, w_1, \ldots, w_{N_{\mathcal{S}}}]^T$:
\begin{equation}\label{Eq:transmitByUAV}
  \bt = \bW^H\widetilde{\bx} = \bW^H\bSigma^{-\frac{1}{2}}\bh_{TU}s_T + \bW^H\bSigma^{-\frac{1}{2}}\bn,
\end{equation}
where $\bW = \diag{(\bw)}$.
The received signal at the base station is:
\begin{IEEEeqnarray}{rCl}\label{Eq:ReceivedBase}
  y_B &=& \bh_{UB}^H \bW^H \bSigma^{-\frac{1}{2}}\bh_{TU} s_T + \bh_{UB}^H \bW^H \bSigma^{-\frac{1}{2}}\bn + \nu \IEEEnonumber \\
  &=& \bw^H \bH_{UB}^H\bSigma^{-\frac{1}{2}}\bh_{TU} s_T +  \bw^H \bH_{UB}^H \bSigma^{-\frac{1}{2}}\bn + \nu,
    \IEEEeqnarraynumspace \IEEEyesnumber
\end{IEEEeqnarray}
where, $\bH_{UB} = \diag{(\bh_{UB})}$ and $\nu \sim \mathcal{N}(0,\sigma^2)$ is white Gaussian noise at the base station receiver.
Therefore, the Signal to Noise Ratio (SNR) at the base station can be expressed by:
\begin{equation}\label{Eq:SNRBase}
  {SNR}_B = \frac{\bw^H \bk \bk^H\bw}{\bw^H \bH_{UB}^H\bH_{UB}\bw + \sigma^2},
\end{equation}
where $\bk = \bH_{UB}^H\bSigma^{-\frac{1}{2}}\bh_{TU}$. Each UAV (including the leader) has a limited energy to forward the target's signal. Using (\ref{Eq:transmitByUAV}), the power consumption at each coalition member for relaying the target's signal can be written as:
\begin{IEEEeqnarray}{rCl}\label{Eq:individualPower}
&P_{i,R} = [\bw^H]_i[\bSigma^{-\frac{1}{2}}]_{i,i}[\bh_{TU}]_i[\bh_{TU}^H]_i[\bSigma^{-\frac{1}{2}}]_{i,i}[\bw]_i \IEEEnonumber \\
&+ [\bw^H]_i[\bw]_i \IEEEyesnumber
\end{IEEEeqnarray}
where $i = 0,1,2,\ldots, N_{\mathcal{S}}$. We assume that the individual transmission power of UAV $i$ is below a certain threshold, denoted by $P_{i}^{max}$.

\subsection{Cooperation Credit for UAVs} \label{sec:credit}
To monitor the cooperative behavior of the UAVs, a cumulative cooperative credit is defined for each UAV based on the amount of resources that it utilizes for a specific task. We assume that the credit of all UAVs are initialized to an equal initial credit $C_i^{(0)} = C > 0$, $i = 1,2,\ldots,N$. After completing a task, each UAV's credit is calculated using the following steps:

\begin{enumerate}
  \item First, the credit of UAV $i$ at time $n$ is updated as
  \begin{equation}
      \tilde{C}_i^{(n)} =
          \left\{\begin{array}{l c}
               C_i^{(n-1)} + \Delta{C_i}^{(n)} & \mbox{If \;\;} \exists k | ~~U_i \in \mathcal{S}_k, \\
               C_i^{(n-1)} & \mbox{Otherwise}
          \end{array}\right.
  \end{equation}
Here, the change in credit $\Delta{C_i}^{(n)}$ is defined by:
  \begin{equation}
    \Delta{C_i}^{(n)} = \frac{\tau_k}{\sum_{l \in \mathcal{S}_k} a_l} a_i
  \end{equation}
where $\tau_k = \sum_{j = 1}^{N_r}\tau_k^j$ represents a value of task k and $a_i = \sum_{j = 1}^{N_r} \gamma_1(\frac{r_i^j}{\tau_k^j})$ denotes the effective resource contribution of each UAV.

\item If all $\tilde{C}_i^{(n)}$ for $i = 1,2,\ldots,N$ are equal, then we set $C_i^{(n)} = C$ to reset the credits. Otherwise:
  \begin{equation}
    {C}_i^{(n)} = C \frac{\tilde{C}_i^{(n)} - \tilde{C}_{min}}{\tilde{C}_{max}-\tilde{C}_{min}}
  \end{equation}
where $\tilde{C}_{max} = \max_{i=1,2,\ldots,N}{\{\tilde{C}_i^{(n)}\}}$ and $\tilde{C}_{min} = \min_{i=1,2,\ldots,N}{\{\tilde{C}_i^{(n)}\}}$.
\end{enumerate}
Therefore, the credits in each step are scaled to values within $[0,C]$ range. After the completion of each task, the updated credits are broadcasted by the coalition leader to be used for future coalition formations. For the sake of simplicity, we drop the superscripts $(n)$ hereafter.

\subsection{Proposed Leader-Follower Coalition Formation} \label{sec:formation} 
In the proposed coalition formation method, first the leaders select their coalition members among the available candidates to maximize their corresponding coalition's utility function to enhance the efficiency in completing their encountered task noting the resource constraints. The coalition value for the $k^{\text{th}}$ leader is defined in such a way to: i) assure the existence of required resources to handle a task, ii) avoid over-spending the resources on a specific task, iii) guarantee the timely completion of the task, iv) provide the required quality of communication to relay target's message, and also v) select the reliable UAVs as follow:
\begin{IEEEeqnarray}{rCl}\label{Eq:LeaderUtility}
  &v(\mathcal{S}_k) = \alpha_1 \sum_{i = 1}^{N_{\mathcal{S}}} C_i+\alpha_2 \gamma_{-L}\Big(\frac{SNR_{Thr}}{SNR_{\mathcal{S}_k}}\Big)\nonumber \\
   &+ \alpha_3 \sum_{j = 1}^{N_r}{\gamma_{-L} \Big(\frac{R_k^j}{\tau_k^j} \Big)}
  - \gamma_{L} \Big(\frac{\max_{i|U_i \in \mathcal{S}_k}\{\delta_{i,k}\}}{f_{\delta}(\rho)}\Big)
  \IEEEeqnarraynumspace
\end{IEEEeqnarray}
for a large enough positive value of $L$. Here, the design parameters $\alpha_1, \alpha_2$, and $\alpha_3> 0$ represent the importance of the credit and quality of communication compared to the resource optimization goal for the leader. Also, $SNR_{Thr}$ represents the minimum required $SNR_{\mathcal{S}_k}$ to successfully relay the target's message to the base station. It is worth mentioning that the maximum function in last term of (\ref{Eq:LeaderUtility}) is used to ensure that even the UAV with the latest time of arrival will be at task position in time. It should be noted that the k'th leader is a member of the coalition $\mathcal{S}_k$. %However, we can extend the definition of coalition value and have $v(\mathcal{S}) = 0$ for a coalition $\mathcal{S}$ which does not include any leader.

\begin{algorithm}
\caption{Coalition formation algorithm} \label{Table:formation}
\begin{algorithmic}[1] 
\State{Initializing coalitions} \Comment{Each leader $k$ starts from a partition of the singleton coalitions of UAVs who responded to its proposal}
\State \emph{Merge-and-split coalition formation algorithm by the leader}
\Comment{for all available UAVs for this task}
\While{Change in coalition values is greater than $\epsilon$}
\While{$\mathcal{S}_i$ and $\mathcal{S}_j$ exist with
        $v(\mathcal{S}_i\bigcup \mathcal{S}_j) > v(\mathcal{S}_i) + v(\mathcal{S}_j)$}
\State Merge $\mathcal{S}_i$ and $\mathcal{S}_j$.
\EndWhile
   \While{$\mathcal{S}_i$ and $\mathcal{S}_j$ exist such that:
        $v(\mathcal{S} = \mathcal{S}_i \bigcup \mathcal{S}_j) < v(\mathcal{S}_i) + v(\mathcal{S}_j)$}
    \State  Split $\mathcal{S}$ into partitions $\mathcal{S}_i$ and $\mathcal{S}_j$.
  \EndWhile
 \If{ There is a split}
   \State   go to 4.
    \EndIf
\EndWhile
\State Select the coalition with highest coalition value, notify the UAVs of this selected coalition
\If{All selected potential followers said Yes}
\State Terminate
\Else
\State Exclude the ones with \emph{No} response, go to 3
\EndIf
\newline
\State \emph{Selecting the best formation request from different leaders by each selected follower}
\If{Received only one formation request}
\State Say \emph{Yes} to that leader
\Else
\State {Say \emph{Yes} to the leader of coalition that maximizes (\ref{follower_utility}), say \emph{No} to other leaders}
\EndIf
\end{algorithmic}
\end{algorithm}

In order to find the optimal coalition which maximizes the leader's utility function (\ref{Eq:LeaderUtility}), a search over all $2^{L_k}$ possible coalitions is required, where $L_k$ denotes the number of potential follower UAVs who responded to the proposal of leader $k$. To avoid such extensive search, a low complexity \textit{merge-and-split} algorithm is proposed. In this method, each leader $k$ separately starts from an initial state where the set of UAVs who responded to its proposal is partitioned into $L_k$ singleton coalitions.
Afterward, in each step, two chosen coalitions $\mathcal{S}_{ki}$ and $\mathcal{S}_{kj}$ are merged if $v(\mathcal{S}_{ki} \bigcup \mathcal{S}_{kj}) > v(\mathcal{S}_{ki}) + v(\mathcal{S}_{kj})$. Here, since the value of the coalitions which does not include the leader is zero, the only possible merge happens if a singleton coalition $\{u\}$ and the coalition $\mathcal{S}_k$ which contains k'th leader satisfy the condition $v(\mathcal{S}_k \bigcup \{u\}) > v(\mathcal{S}_k) $. Also, if for a non-singleton coalition $\mathcal{S}$ there exists a partition of two coalitions $\mathcal{S}_{ki}$ and $\mathcal{S}_{kj}$ such that $v(\mathcal{S}= \mathcal{S}_{ki} \bigcup \mathcal{S}_{kj}) < v(\mathcal{S}_{ki}) + v(\mathcal{S}_{kj})$, then $\mathcal{S}$ splits into $\mathcal{S}_{ki}$ and $\mathcal{S}_{kj}$ \footnote{Each leader considers itself as a constant member of all coalitions under evaluation.}. At each step of the merge and split algorithm, the optimum value of the $SNR_B$ in the utility function (\ref{Eq:LeaderUtility}) for the coalitions should be calculated, as described in details in the next section. It is worth mentioning that the merge and split algorithm is used to obtain a suboptimal coalition solution of (\ref{Eq:LeaderUtility}) with lower complexity. When the changes in coalition values over consequent rounds become below a threshold, the coalition with highest coalition value would be selected from the leader's perspective and the members will be notified.

When the optimal coalitions from the leaders' perspective are formed, the formation requests will be sent out to the selected UAVs. If a potential follower receives multiple requests from different leaders, it prefers to join the coalition which benefits it the most. The utility function of the followers is defined as:
\begin{equation} \label{follower_utility}
  v_p(U_i,\mathcal{S}_k) = \Delta{C_i} - \alpha_4 \delta_{i,k},
\end{equation}
where $\Delta{C_i}$ and $\delta_{i,k}$ are the expected credit (knowing the required resources for the encountered task) and traveling time to task $k$ if user $U_i$ joins the coalition $\mathcal{S}_k$, respectively. $\alpha_4$ is a design parameter that indicates the importance of the traveling cost compared to the change in credit. The proposed coalition formation algorithm is summarized in Algorithm \ref{Table:formation}.

\textbf{Stability of the proposed coalition formation algorithm:} The proposed coalition formation process includes a series of merge-and-split coalition formation steps. After each stage of merge-and-split coalition formation, the leader sends out the requests and collects the followers' responses. If all responses are affirmative, the algorithm stops; otherwise, a new merge-and-split is executed that keeps the current members with positive responses, and evaluates the new available UAVs. As such, in order to show the stability of the algorithm, it is sufficient to show that: i) the number of sequential rounds of the algorithm is finite, and ii) each merge-and-split stage is stable. The first is ensured, because at each round, we exclude the members with no interest in joining the formed coalition, and therefore after at most $m_s$ iterations the algorithm stops, where $m_s$ is the number of UAVs with negative response to join a coalition. The second condition is also satisfied since as proved in \cite{Apt}, the formed coalitions by the leaders are $\mathcal{D}_{hp}$-stable. This is due to the fact that the only type of allowed membership changes are based on single or possibly multiple merge-and-splits (i. e. a UAV or a group of them are only allowed to leave a partition by means of merges or splitting).

In Section \ref{sec:com}, the details of the inner optimization problem to determine the optimal SNR of target's signal at the base station for a coalition of interest is described. 

\section{Communication Optimization} \label{sec:com}
The optimization from the leaders' perspective is to maximize the coalition value (\ref{Eq:LeaderUtility}) by searching over the coalitions, i.~e.~ $\mathcal{S}$, and determine the optimum beamforming scheme, i.~e.~ $\bw$ to optimize $SNR_B$.
For each coalition, the optimal value of (\ref{Eq:SNRBase}), ${SNR}_{\mathcal{S}}^{opt}$ can be obtained via the following inner optimization problem:
\begin{IEEEeqnarray}{rCl}\label{Eq:InnerSNRopt}
   & \max_{\bw}\;\;\frac{\bw^H \bk \bk^H \bw}{\bw^H \bH_{UB}^H\bH_{UB}\bw + \sigma^2}   \\
   &[\bw^H]_i\left([\bSigma^{-\frac{1}{2}}]_{i,i}[\bh_{TU}]_i[\bh_{TU}^H]_i[\bSigma^{-\frac{1}{2}}]_{i,i}+1\right)[\bw]_i \nonumber \\ & \leq P_{i}^{max} \;\;\;, i = 0,1, \ldots, N_c \nonumber
\end{IEEEeqnarray}
A bisection method is described in \cite{palomar2010convex} to solve the fractional Quadratically Constrained Quadratic Program (QCQP)s such as the optimization problem in (\ref{Eq:InnerSNRopt}) that can be rewritten as:
\begin{IEEEeqnarray}{rCl}\label{Eq:InnerSNRoptBisection1}
   &\max_{\bw,t} \; t \IEEEyesnumber\\
   & [\bw^H]_i\left([\bSigma^{-\frac{1}{2}}]_{i,i}[\bh_{TU}]_i[\bh_{TU}^H]_i[\bSigma^{-\frac{1}{2}}]_{i,i}+1\right)[\bw]_i \nonumber \\ &\leq P_i^{max} \;\;\;, i = 0,1, \ldots, N_c \IEEEnonumber \\
  & \bw^H \bk \bk^H \bw \geq t\bw^H \bH_{UB}^H\bH_{UB}\bw + t\sigma^2 \IEEEnonumber ,
\end{IEEEeqnarray}
where t is an auxiliary variable. In the bisection method, given the value of $t \geq 0$, the following feasibility check problem is investigated:
\begin{IEEEeqnarray}{rCl}\label{Eq:InnerFeasibility}
   & \mbox{Find}\;\;\bw \IEEEyesnumber\\
 &[\bw^H]_i\left([\bSigma^{-\frac{1}{2}}]_{i,i}[\bh_{TU}]_i[\bh_{TU}^H]_i[\bSigma^{-\frac{1}{2}}]_{i,i}+1\right)[\bw]_i \IEEEnonumber \\ &\leq P_i^{max} \;\;\;, i = 0,1, \ldots, N_c \IEEEnonumber \\
  & \bw^H \bk \bk^H \bw \geq t\bw^H \bH_{UB}^H\bH_{UB}\bw + t\sigma^2 \IEEEnonumber .
\end{IEEEeqnarray}
\normalsize
If the problem described in (\ref{Eq:InnerFeasibility}) is feasible, then the optimal solution of problem (\ref{Eq:InnerSNRoptBisection1}), i.~ e.~ ${SNR}_{\mathcal{S}}^{opt}$, is less than or equal $t$; otherwise ${SNR}_{\mathcal{S}}^{opt} \geq t$. The variable $t$ is up limited by $t^{up}$ which is obtained in the following lemma.
\begin{lemma}\label{Lemma:OneEigen}
$t^{up} = \bk^H \bQ^{-1}\bk$ is an up limit for objective in optimization problem (\ref{Eq:InnerSNRopt}).
\end{lemma}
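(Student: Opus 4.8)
The plan is to upper-bound the objective of (\ref{Eq:InnerSNRopt}) by a chain of two relaxations that reduce it to a generalized Rayleigh quotient with a rank-one numerator, whose maximum is known in closed form. First I would discard the per-UAV power constraints altogether: since they only shrink the feasible set, the supremum of $SNR_{\mathcal S}$ over \emph{all} $\bw$ dominates the constrained optimum ${SNR}_{\mathcal{S}}^{opt}$, so any upper bound for the unconstrained quotient is also an upper bound for ${SNR}_{\mathcal{S}}^{opt}$. Second, writing $\bQ = \bH_{UB}^H\bH_{UB}$ — which is positive definite because $\bH_{UB}$ is diagonal with nonzero base-station channel gains, so $\bw^H\bQ\bw>0$ for all $\bw\neq\mathbf{0}$ — I note that adding the constant $\sigma^2>0$ to the denominator can only enlarge it, hence
\[
\frac{\bw^H \bk \bk^H \bw}{\bw^H \bH_{UB}^H\bH_{UB}\bw + \sigma^2}\;\le\;\frac{\bw^H \bk \bk^H \bw}{\bw^H \bQ\,\bw}, \qquad \bw\neq\mathbf{0}.
\]

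It then remains to maximize the right-hand side. I would substitute $\bu = \bQ^{1/2}\bw$ (legitimate since $\bQ\succ0$), turning the quotient into the Rayleigh quotient $\bu^H\big(\bQ^{-1/2}\bk\bk^H\bQ^{-1/2}\big)\bu/\bu^H\bu$ of the Hermitian matrix $\bQ^{-1/2}\bk\bk^H\bQ^{-1/2}=(\bQ^{-1/2}\bk)(\bQ^{-1/2}\bk)^H$. This matrix is rank one, so its only nonzero eigenvalue equals its trace, namely $\bk^H\bQ^{-1}\bk$, and that value is the maximum of the Rayleigh quotient (attained at $\bu\propto\bQ^{-1/2}\bk$). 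Equivalently one can skip eigenvalues and apply Cauchy--Schwarz directly, $|\bk^H\bw|^2=|(\bQ^{-1/2}\bk)^H(\bQ^{1/2}\bw)|^2\le(\bk^H\bQ^{-1}\bk)(\bw^H\bQ\,\bw)$, which rearranges to the same inequality. Chaining the two relaxations gives ${SNR}_{\mathcal{S}}^{opt}\le\bk^H\bQ^{-1}\bk=t^{up}$, which is the claim.

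The only delicate point — and the one I would be careful to state explicitly — is the invertibility of $\bQ$: it requires every coalition member to have a nonzero link to the base station. This is benign, since a UAV with a dead base-station channel has $[\bh_{UB}]_i=0$, hence $[\bk]_i=0$, contributes nothing to the relayed signal, and can be removed from $\mathcal S$ (or the whole argument restricted to the support of $\bh_{UB}$) without loss. Everything else is elementary; I would also remark that $t^{up}$ need not be attainable once the power constraints are reinstated, but that is immaterial here because the lemma only needs it as a valid upper limit to initialize the bisection over $t$.
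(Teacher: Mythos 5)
Your argument is correct and takes essentially the same route as the paper's: both reduce the bound to the largest eigenvalue of the rank-one matrix $\bQ^{-\frac{1}{2}}\bk\bk^H\bQ^{-\frac{1}{2}}$, which equals $\bk^H\bQ^{-1}\bk$. The only difference is presentational --- the paper argues through the feasibility condition of the bisection subproblem while you bound the (generalized Rayleigh) quotient directly --- and your write-up is in fact slightly more careful, since you explicitly justify discarding the power constraints and the $\sigma^2$ term and flag the required invertibility of $\bQ$, points the paper glosses over.
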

\begin{proof} Please see Appendix \ref{Appendix:One}.
\end{proof}
Using Lemma 1 and by assuming $\delta$ as the absolute precision of the final objective function value, the iteration's complexity order of the bisection method can be written as $\mathcal{O}(\log(\frac{t^{up}}{\delta}))$. The feasibility problem (\ref{Eq:InnerFeasibility}) can be solved using the SemiDefinite Programming (SDP) relaxation method. A drawback of this method is dealing with matrix rank deduction in SDP relaxation problems which increases the complexity of the solution. As an alternative, \cite{zaeri2016} provides a technique to convert such problems to the Second Order Cone Programming (SOCP) problems. The key point is to notice if $\bw^{opt}$ is an optimal solution to the feasibility problem (\ref{Eq:InnerFeasibility}), then for any arbitrary real number $\theta$, $\tilde{\bw} = \bw^{opt} e^{j\angle{\theta}}$ is also an optimal solution. Therefore, it is possible to assume that $\bw^H \bH_{UB}^H\bSigma^{-\frac{1}{2}}\bh_{TU}$ is a non-negative real number. By considering this constraint, the feasibility problem (\ref{Eq:InnerFeasibility}) can be rewritten as the following SOCP problem for a given $t \geq 0$:
\begin{IEEEeqnarray}{rCl}\label{Eq:InnerFeasibility1}
   & \mbox{Find\;\;}\bw \IEEEyesnumber\\
  & [\bw^H]_i\left([\bSigma^{-\frac{1}{2}}]_{i,i}[\bh_{TU}]_i[\bh_{TU}^H]_i[\bSigma^{-\frac{1}{2}}]_{i,i}+1\right)[\bw]_i \IEEEnonumber \\ &\leq P_i^{max} \;\;\; i = 0,1, \ldots, N_c \IEEEnonumber \\
  & \bw^H \bk \geq \sqrt{t\bw^H \bH_{UB}^H\bH_{UB}\bw + t\sigma^2} \nonumber \\
  & \mathcal{R}\{\bw^H \bH_{UB}^H\bSigma^{-\frac{1}{2}}\bh_{TU}\} \geq 0 \IEEEnonumber \\
  & \mathcal{I}\{\bw^H \bH_{UB}^H\bSigma^{-\frac{1}{2}}\bh_{TU}\} = 0 \IEEEnonumber.
\end{IEEEeqnarray}
The feasibility check problem (\ref{Eq:InnerFeasibility1}) can be effectively solved using the \textit{cvx} convex optimization toolbox. By considering the cubic complexity order of the SOCP method, the complexity order of the communication optimization can be expressed as $\mathcal(|\mathcal{S}|^3 \log(\frac{t^{up}}{\delta}))$.

\section{Simulation results} \label{sec:simulation}
In order to validate the performance of proposed method, a scenario consisting of two leader- and six follower- UAVs is considered. The UAVs are uniformly located in a $\mathcal{R}^3$ region. It is assumed that the leaders do not carry out the required resources to perform the identified tasks individually and they call out to form coalitions. Five types of resources are considered and the amount of each resource at each UAV as well as the required resources for each task are generated randomly. The traveling time is generated proportional to the distance of the UAVs to the corresponding target. All communication channels are generated by zero mean Gaussian variables with variance proportional to the inverse distance between the corresponding transmitter and receiver antennas.

%Temporary comment
\begin{figure}[t]
\vspace{-10pt}
  \centering
  \centerline{\includegraphics[width=8 cm]{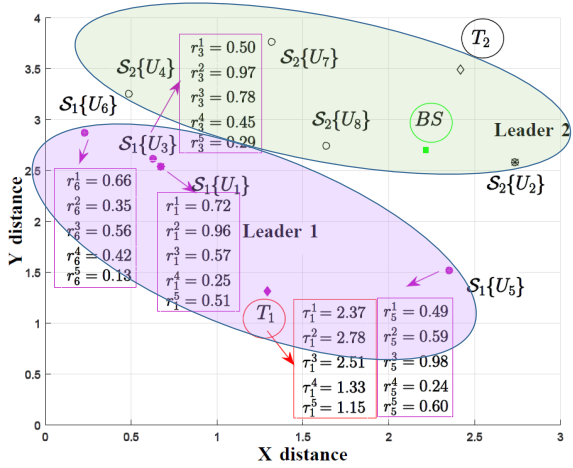}}
    \caption{The stable formed coalitions to complete two identified tasks in an experiment with two leader- and six follower- UAVs}
    \label{fig:Geografic}
\end{figure}
 %(usually after first merging operations cascaded by split operations).
 Figure \ref{fig:Geografic} demonstrates the positions of UAVs in the network and the formed coalitions in a numerical experiment. For this example, the stable formed coalitions are as $\mathcal{S}_1 = \{U_1,U_3,U_5,U_6\}$ and $\mathcal{S}_2 = \{U_2,U_4,U_7,U_8\}$, where coalitions $\mathcal{S}_1$ and $\mathcal{S}_2$ complete the tasks 1 and 2, respectively. In this figure, we show the UAVs by notation of $\mathcal{S}_i\{U_j\}$ that refers to the UAV number $j$ is in coalition $\mathcal{S}_i$. The proposed algorithm has been examined for different scenarios, where the stable coalitions are formed after few rounds. The available resources in coalition $\mathcal{S}_1$ provided by each UAV as well as the required resources to complete $T_1$ are listed in this figure.

Table \ref{Tabel:Coalition1Result} shows the comparison of available resources in coalition $\mathcal{S}_1$ versus the required resources to complete the task encountered by this coalition. As shown in this table, the summation of available resources in coalition $\mathcal{S}_1$ are higher than the required resources for task 1 which guarantees that this task can be completed by the members of coalition $\mathcal{S}_1$. 
Moreover, to evaluate the efficiency of our proposed coalition formation method in a resource constraint network, we define an \emph{Efficiency Factor} as $E.~F.~ = \sum_{j = 1}^{N_r}\{\sum_{i \in \mathcal{S}_1}{r_i^j} / \tau_1^j\}/{N_r}$. This factor evaluates the performance of the formed coalitions in terms of resource allocation efficiency, where the closer value of $E.~F.~$ to 1 means the more efficient the algorithm is in terms of not over-spending the resources for a particular task. In our example, the average $E.~F.~$ value is 1.11, which is fairly close to 1.
%\vspace{-10pt}
\begin{table}[b]
\vspace{-15pt}
  \begin{center}
  \caption{Comparison of the available resources and the required resources in coalition $ \mathcal{S}_1$}
    \begin{tabular}{|c|c|c|c|}
      \hline
      % after \\: \hline or \cline{col1-col2} \cline{col3-col4} ...
      $\begin{array}{c}
        \mbox{Available} \\
        \mbox{resources in~} \mathcal{S}_1 \\
        \sum\limits_{i \in \mathcal{S}_1}{r_i^j}
      \end{array}$
      %\pbox{20cm} { available \\resources in $\mathcal{S}_1$\\$\sum_{i \in \mathcal{S}_1}{r_i^j}$ }
      &
        $\begin{array}{c}
          \mbox{Required} \\
          \mbox{resources} \\
          \mbox{for~} T_1 \\
          ( \tau_1^j)
        \end{array}$%\pbox{20cm} {required \\ resources \\for $T_1$\\ $\tau_1^j$}
        & $\sum\limits_{i \in \mathcal{S}_1}{r_i^j} \geq \tau_1^j$  &
      {$\sum\limits_{i \in \mathcal{S}_1}\frac{r_i^j}{\tau_1^j}$}\\
      \hline \hline
      \pbox{20cm}{resource 1: 2.37} & 2.37 & $\surd$ & 1.00 \\
      \hline
      resource 2: 2.87 & 2.78 & $\surd$ & 1.03\\
      \hline
       resource 3: 2.90 & 2.51 & $\surd$ & 1.16\\
      \hline
       resource 4: 1.36 & 1.33 & $\surd$ & 1.02\\
      \hline
       resource 5: 1.53 & 1.15 & $\surd$ & 1.33\\
      \hline \hline
      \multicolumn{4}{|c|}{$E.~F.~ = \sum_{j = 1}^{N_r}\{\sum_{i \in \mathcal{S}_1}{r_i^j} / \tau_1^j\}/{N_r} = 1.11$}\\
      \hline
    \end{tabular}
    \label{Tabel:Coalition1Result}
  \end{center}
\end{table}
%Temporary
\begin{figure} [t]
    \psfrag{Efficiency Factor}[][]{\Large{Efficiency Factor}}
    \psfrag{Games Rounds}[][]{\Large{Operation Rounds}}
    \psfrag{Our proposed Coalition formation result}{\Large{Coalition formation results}}
    \psfrag{Choosing close nodes}{\Large{Choosing close UAVs}}
\vspace{-10pt}
     \centerline{\resizebox{!}{6.cm}{\includegraphics{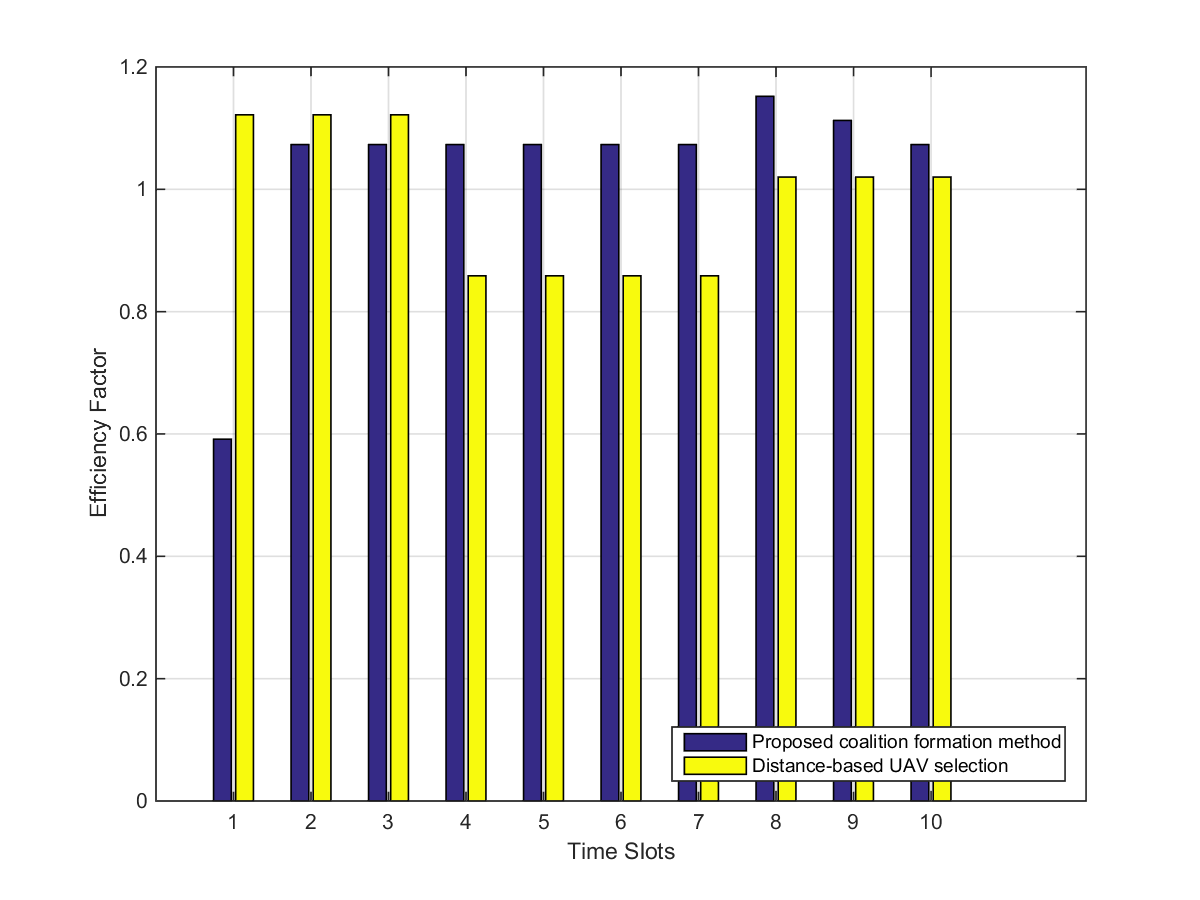}}}
    \caption{Efficiency factor for the proposed coalition formation method compared to the case of selecting the closest UAVs. }
    \label{fig:comparison}
\end{figure}
In Fig. \ref{fig:comparison}, the efficiency factor of the proposed method after forming stable coalitions  is compared to the scenario in which the closest UAVs are assigned to the targets without considering the resources offered by these UAVs. As shown in this figure, our proposed method outperforms the case of distance-based UAVs selection for different system settings over the course of time.

Figure \ref{fig:credit} evaluates the performance of our proposed coalition formation method in identifying the potential selfish UAVs by showing the change in cooperative credit of six follower-UAVs over time. The credits are normalized to be in the range of $[0,1]$. In this scenario, we assume that UAVs, $U_5$ and $U_6$ are selfish in the sense that they do not consume the resources they initially committed after joining a coalition. As seen in Fig. \ref{fig:credit}, the credits of these selfish UAVs, $C_5$ and $C_6$ significantly decrease over time, meaning that these selfish users will not be selected by the leaders in the next rounds of coalition formation. It is worth mentioning that other factors including changes in UAVs' location and dynamic nature of the task requirements can also play a role in small variations in agents' credits that may result in credit reduction for trustable agents. The slight reduction of credits of $U_3$ and $ U_4$ is an example of this fact.
%Temporary
\begin{figure} [t]
\vspace{-5pt}
    \psfrag{Credit }[][]{\Large{Credit}}
    \psfrag{Games Rounds}[][]{\Large{Operation Rounds}}
    \psfrag{Cr 1}{$C_1$}
    \psfrag{Cr 2}{$C_2$}
    \psfrag{Cr 4}{$C_4$}
    \psfrag{Cr 5}{$C_5$}
    \psfrag{Cr 6}{$C_6$}
     \centerline{\resizebox{!}{6.cm}{\includegraphics{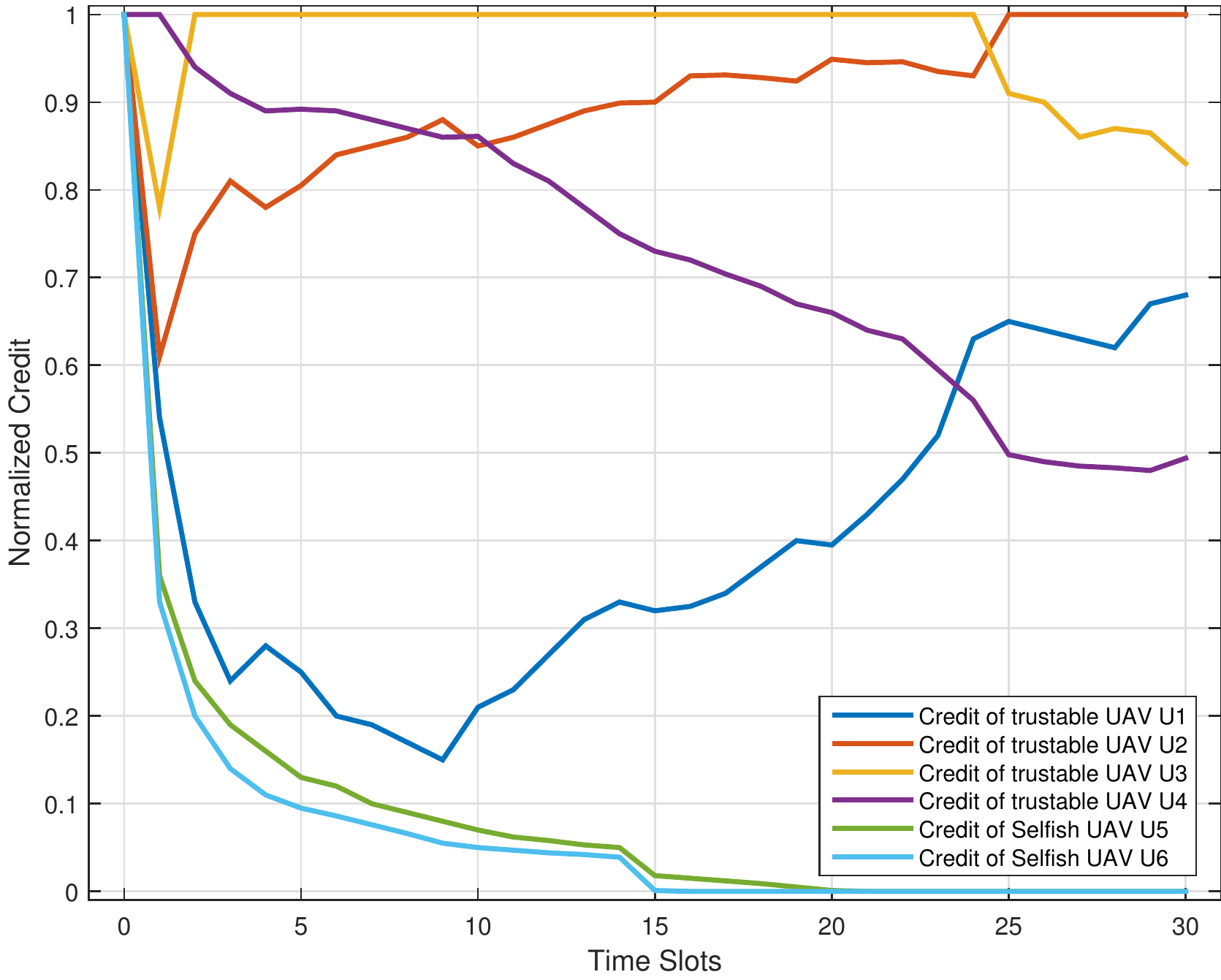}}}
    \caption{The change in cooperative credit of UAVs based on their cooperative/selfish behavior in resource sharing. UAVs 5 and 6 are assumed to be selfish.}
    \label{fig:credit}
\end{figure}
%\begin{figure}
%    \psfrag{Credit }[][]{\Large{Credit}}
%    \psfrag{Games Rounds}[][]{\Large{Operation Rounds}}
%    \psfrag{Cr 1}{$C_1$}
%    \psfrag{Cr 2}{$C_2$}
%    \psfrag{Cr 4}{$C_4$}
%    \psfrag{Cr 5}{$C_5$}
%    \psfrag{Cr 6}{$C_6$}
%     \centerline{\resizebox{!}{4.5cm}{\includegraphics{credit-eps-converted-to.pdf}}}
%    \caption{Credit increment or decrement of UAVs based on the UAV cooperations.}
%    \label{fig:credit}
%\end{figure}

\section{Conclusion}\label{sec:Conclusion}
A leader-follower coalition formation game is developed for distributed task allocation and optimizing the cooperative communication between the detected targets and the base station in a heterogeneous network of UAVs. A reputation-based mechanism is developed to monitor the cooperative behavior of UAVs and filter out the selfish UAVs who have not accumulated sufficient collaboration credits. The proposed methodology enables optimizing several factors including the timely completion of the tasks, and preserving the network resources from the leader's perspectives, while it benefits follower UAVs by lowering their travel times to join the coalitions. 
%Several factors including the timely completion of the tasks, and preserving the network resources are considered in the coalition formation optimization from the leader's perspectives. The follower UAVs take into account the potential increase in their credits and the travel times to decide on joining the available coalitions. 
The simulation results show the convergence of the proposed method in forming stable coalitions with high resource efficiency factors to complete the encountered tasks.

\section{ACKNOWLEDGMENT OF SUPPORT AND DISCLAIMER}
(a) Contractor acknowledges Government's support in the publication of this paper. This material is based upon work funded by AFRL under AFRL Contract No. FA8750-16-3-6003. (b) Any opinions, findings and conclusions or recommendations expressed in this material are those of the author(s) and do not necessarily reflect the views of AFRL.

\appendix \label{Appendix:One}
\section{Appendix A}
\textbf{Appendix A:} The optimization problem becomes feasible if and only if the matrix $\bk \bk^H - t\bH_{UB}^H\bH_{UB}$ is nonnegative definite. Since the matrix $\bQ = \bH_{UB}^H\bH_{UB}$ is positive semi-definite, the matrix $\bQ^{-\frac{1}{2}}(\bk \bk^H - t\bH_{UB}^H\bH_{UB})\bQ^{-\frac{1}{2}} = \bQ^{-\frac{1}{2}}\bk \bk^H \bQ^{-\frac{1}{2}} - t \textbf{I}$ must be nonnegative definite. That results $t \leq \lambda_{max}(\bQ^{-\frac{1}{2}}\bk \bk^H \bQ^{-\frac{1}{2}}) = \bk^H \bQ^{-1}\bk $.
\bibliographystyle{IEEEtran}
\bibliography{IEEEabrv,references}
\end{document}